\documentclass[conference]{IEEEtran}
\usepackage{amsfonts,amsmath,amsthm,url}

\newtheorem{thm}{Theorem}
\newtheorem{cor}[thm]{Corollary}
\newtheorem{prop}[thm]{Proposition}
\newtheorem{lem}[thm]{Lemma}

\theoremstyle{definition}
\newtheorem{defn}[thm]{Definition}

\theoremstyle{remark}


\title{On the algebraic representation\\ 
  of selected optimal non-linear binary codes}

\author{%
  \IEEEauthorblockN{Marcus Greferath and Jens Zumbr\"agel}
  \IEEEauthorblockA{%
    Claude Shannon Institute\\
    University College Dublin\\
    Belfield, Dublin 4, Ireland\\
    Email: \{marcus.greferath, jens.zumbragel\}@ucd.ie}}

\newcommand{\F}{{\mathbb F}}
\newcommand{\Z}{{\mathbb Z}}
\newcommand{\rmod}{\ \ {\rm mod}\ }
\newcommand{\smod}{\ {\rm mod}\ }
\newcommand{\om}{\mbox{\tiny$\omega$}}

\begin{document}

\maketitle

\begin{abstract}
  \boldmath Revisiting an approach by Conway and Sloane we investigate
  a collection of optimal non-linear binary codes and represent them
  as (non-linear) codes over $\Z_4$. The Fourier transform will be
  used in order to analyze these codes, which leads to a new algebraic
  representation involving subgroups of the group of units in a
  certain ring.

  One of our results is a new representation of Best's $(10, 40, 4)$
  code as a coset of a subgroup in the group of invertible elements of
  the group ring $\Z_4[\Z_5]$. This yields a particularly simple
  algebraic decoding algorithm for this code.

  The technique at hand is further applied to analyze Julin's $(12,
  144, 4)$ code and the $(12, 24, 12)$ Hadamard code.  It can also be
  used in order to construct a (non-optimal) binary $(14, 56, 6)$
  code.
\end{abstract} 


\section{Introduction}

Let $R$ be a finite ring and let $\delta$ be a metric on $R$, which is
additively extended to a metric on $R^n$.  An $(n, M, d)_{R,\delta}$
code is a (not necessarily linear) subset of~$R^n$ having
cardinality~$M$ and minimum distance $d$ with respect to the
metric~$\delta$.  Define $A_{R,\delta}(n, d)$ to be the maximum number
$M$ such that an $(n, M, d)_{R,\delta}$ code over $R$ exists.  A
fundamental task in coding theory is to determine for given $n$ and
$d$ the number $A_{R,\delta}(n, d)$, and to find a corresponding {\em
  optimal} code.  In the following we consider the case $R=\F_2$ and
$\delta=\delta_H$, the Hamming metric, or $R=\Z_4$ and
$\delta=\delta_{\rm Lee}$, the Lee metric.  We will use the simplified
notations $(n, M, d)$ for $(n, M, d)_{\F_2,\delta_H}$ and $(n, M,
d)_{\Z_4}$ for $(n, M, d)_{\Z_4,\delta_{\rm Lee}}$.

The purpose of this paper is to provide further insight into the
algebraic representation of some optimal non-linear binary codes.  Our
work is based on the quaternary constructions for the Best code and
the Julin code given by Conway and Sloane, in fact, we take the
algebraic representation in~\cite{cons} a step further in that we
involve the multiplicative structure of the ambient group ring.  More
specifically, we use the Fourier transform as well as subgroups of the
unit group of a group ring to analyze the codes.

As a result, we obtain a new succint description of Best's $(10, 40,
4)$ code as a subgroup coset in the unit group of the ring.  This
representation gives rise to a novel algebraic decoding algorithm for
this code.  We also apply this technique to analyze Julin's $(12, 144,
4)$ code and the $(12, 24, 12)$ Hadamard code.  We further use it to
construct a (non-optimal) binary $(14, 56, 6)$ code.


\section{Preliminaries}

\subsection{Codes over $\Z_4$}

In the beginning of the 90s several optimal non-linear binary codes
were recognized as Gray images of $\Z_4$-linear codes (see, e.g.,
\cite{hammons}).  Since then codes over $\Z_4$ or over more general
rings gained much attention in the literature.

Define the Lee weight on $\Z_4$ by
\[ w_{\rm Lee}: \Z_4 \to {\mathbb N}\,,\quad
x  \mapsto \min\{ x, 4-x\} \:. \]
Setting $\delta_{\rm Lee}(x,y)=w_{\rm Lee}(x-y)$, it turns out that
$(\Z_4,\delta_{\rm Lee})$ is isometric to $(\Z_2^2,\delta_H)$
via the {\em Gray mapping}
\[\Z_4 \to \Z_2^2 \,,\quad a+2\,b \,\mapsto\, a\,(0,1) + b\,(1,1) \:, \] 
where $a,b\in\{0,1\}$.  The componentwise extension of this mapping to
$\Z_4^n$ yields an isometry between $(\Z_4^n,\delta_{\rm Lee})$ and
$(\Z_2^{2n},\delta_H)$.

We note that in order to view a (non-linear) binary code as a linear
code over $\Z_4$ it is necessary that its cardinality is a power
of~$2$.  Here we are interested in codes that do not satisfy this
condition.

\subsection{Group rings and the Fourier transform}

A natural algebraic framework for cyclic codes is provided by group
rings.  For a finite ring $R$ the {\em group ring} $R[\Z_n]$ is the
set of all $R$-valued functions on $\Z_n$, equipped with the natural
addition and the multiplication $\star$ that is given by the cyclic
convolution, i.e., \[ (f\star g)\, (j) \; := \; \sum_{i\in\Z_n}
f(i)\,g(j-i) \:, \] where $f,g\in R[\Z_n]$ and $j\in\Z_n$.

Letting $\delta_j$, for $j\in\Z_n$, be the element in $R[\Z_n]$
defined by $\delta_j(j)=1$ and $\delta_j(i)=0$ if $i\ne j$, we can
write any element $f\in R[\Z_n]$ as a sum
$\sum_{i\in\Z_n}f(i)\delta_i$.  The element $f$ will also be denoted
sometimes by a vector $(f(0),f(1),\dots,f(n\!-\!1))$.

Using the concept of a group ring, a cyclic code of length~$n$ over
the ring~$R$ can be understood as a subset in $R[\Z_n]$ that is closed
under multiplication by the element $\delta_1$.

Now we define the discrete Fourier transform and state its fundamental
properties (cf.\ \cite{dv}).

\begin{defn}
  Let $S\supseteq R$ be a ring extension that contains a primitive
  $n$-th root of unity $\omega$.  For $f\in S[\Z_n]$ define the {\em
    Fourier transform\/} $\hat{f} \in S[\Z_n]$ by \[ \hat{f}(i) :=
  \sum_{j\in \Z_n} f(j)\,\omega^{-ji} \:. \]
\end{defn}

\begin{prop}
  For $f, g\in S[\Z_n]$ there holds
  \[ (f \star g)\hat{} = \hat{f} \cdot \hat{g} \:, \]
  where $\cdot$ denotes the componentwise product.
\end{prop}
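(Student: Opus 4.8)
The plan is to prove this identity by directly unravelling the definitions of the cyclic convolution and of the Fourier transform, and then reindexing the resulting double sum. Fix $i\in\Z_n$. First I would write, using the definition of $\hat{}$ and then of $\star$,
\[ (f\star g)\hat{}\,(i) \;=\; \sum_{j\in\Z_n} (f\star g)(j)\,\omega^{-ji}
 \;=\; \sum_{j\in\Z_n}\sum_{k\in\Z_n} f(k)\,g(j-k)\,\omega^{-ji}\:. \]
For each fixed $k$ I would substitute $\ell:=j-k$; since $j\mapsto j-k$ is a bijection of $\Z_n$, the inner summation still runs over all $\ell\in\Z_n$, and $\omega^{-ji}=\omega^{-ki}\,\omega^{-\ell i}$. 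Hence the double sum becomes
\[ \sum_{k\in\Z_n}\sum_{\ell\in\Z_n} f(k)\,g(\ell)\,\omega^{-ki}\,\omega^{-\ell i}\:. \]

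Next I would separate the two summation variables and recognise the factors as the Fourier coefficients:
\[ \Big(\sum_{k\in\Z_n} f(k)\,\omega^{-ki}\Big)\Big(\sum_{\ell\in\Z_n} g(\ell)\,\omega^{-\ell i}\Big)
 \;=\; \hat f(i)\,\hat g(i)\:. \]
Since $i\in\Z_n$ was arbitrary, this establishes the claimed componentwise identity $(f\star g)\hat{}=\hat f\cdot\hat g$.

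The only point requiring care — and the step I expect to be the main (admittedly minor) obstacle — is the factorisation of the double sum into a product in the last display: pulling $\sum_{k,\ell}$ apart requires that $\omega$ commute with the coefficients $f(k),g(\ell)\in S$. In the situations of interest this is automatic, since $S$ is commutative (we work over ring extensions of $\Z_4$); in general one should take $\omega$ to lie in the centre of $S$, which is the natural hypothesis under which the discrete Fourier transform is multiplicative. Apart from this, the computation is a routine reindexing, using only that $\Z_n$ is a group so that translation permutes the index set.
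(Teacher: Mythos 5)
Your proof is correct and is the standard argument: expand both definitions, shift the inner summation index, and factor the resulting double sum using that $\omega$ commutes with the coefficients. The paper itself gives no proof of this proposition --- it simply cites the reference on Fourier transforms over finite rings --- so your write-up supplies exactly the routine computation the authors leave implicit; your remark about commutativity is apt but harmless here, since the ambient ring ${\rm GR}(4,4)\supseteq\Z_4$ is commutative.
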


\begin{prop}
  Assume that $n\in R^\times$.  Define the inverse transform by \[
  \tilde{f}(i) := \frac{1}{n} \sum_{j \in \Z_n}
  f(j)\,\omega^{ij} \:. \] Then it holds $\tilde{\hat{f}}=f =
  \hat{\tilde{f}}$ for all $f\in S[\Z_n]$.
\end{prop}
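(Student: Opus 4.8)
The plan is to establish $\tilde{\hat f}=f$ by a direct computation and to obtain $\hat{\tilde f}=f$ from an essentially identical one. Fix $f\in S[\Z_n]$ and $k\in\Z_n$. Unfolding the two definitions and interchanging the two finite summations gives
\[ \tilde{\hat f}(k) \;=\; \frac1n\sum_{i\in\Z_n}\hat f(i)\,\omega^{ik}
 \;=\; \frac1n\sum_{i\in\Z_n}\sum_{j\in\Z_n} f(j)\,\omega^{-ji}\,\omega^{ik}
 \;=\; \frac1n\sum_{j\in\Z_n} f(j)\sum_{i\in\Z_n}\omega^{i(k-j)} \:. \]
(Here division by $n$ makes sense because $n\in R^\times\subseteq S^\times$.) So everything reduces to the orthogonality relation
\[ \sum_{i\in\Z_n}\omega^{im} \;=\; n \ \text{ if } m\equiv 0 \!\!\!\pmod n, \qquad \sum_{i\in\Z_n}\omega^{im} \;=\; 0 \ \text{ otherwise,} \]
which, once available, yields $\tilde{\hat f}(k)=\tfrac1n\,f(k)\cdot n=f(k)$, as required. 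The computation for $\hat{\tilde f}(k)=\tfrac1n\sum_{j}f(j)\sum_i\omega^{i(j-k)}$ is the same up to the sign of the exponent, and the orthogonality relation is insensitive to replacing $m$ by $-m$; hence no separate argument is needed.

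To prove the orthogonality relation, the case $m\equiv 0$ is immediate since each summand equals~$1$. For $m\not\equiv 0$ put $\sigma:=\sum_{i\in\Z_n}\omega^{im}$; multiplying by $\omega^m$ only permutes the index set, so $\omega^m\sigma=\sigma$, i.e.\ $(\omega^m-1)\,\sigma=0$. At this point I would invoke the defining property of a primitive $n$-th root of unity in a ring, namely that $\omega^m-1$ is not a zero divisor for $0<m<n$ (equivalently, that $\prod_{0<m<n}(\omega^m-1)$ is regular), which forces $\sigma=0$.

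I expect the only genuinely delicate point to be this last step: over a general ring the usual geometric-series cancellation is not automatic, so one has to be explicit about what ``primitive $n$-th root of unity'' means here and why it provides the needed regularity of $\omega^m-1$ (this is exactly the point where the hypothesis on $\omega$, rather than just $\omega^n=1$, is used). Everything else — the interchange of finite sums and the use of $n\in R^\times$ — is routine bookkeeping.
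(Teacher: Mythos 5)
Your argument is correct and is the standard one; the paper itself gives no proof of this proposition, merely citing Dubois--Venetsanopoulos, and what you wrote is essentially the argument found there (swap the finite sums, reduce to the orthogonality relation, prove orthogonality via $(\omega^m-1)\sigma=0$). The one delicate point you flag is the right one: ``order exactly $n$'' alone does not suffice over a general ring, and one must take ``primitive'' to include the regularity of $\omega^m-1$ for $0<m<n$ (equivalently, the vanishing of the sums $\sum_i\omega^{im}$, which is how such roots are usually axiomatized in the DFT-over-rings literature). In the paper's concrete setting this is automatic: $S=\mathrm{GR}(4,4)$ is a local ring with residue field $\F_{16}$, the primitive $5$-th root $\omega$ reduces to a primitive $5$-th root of unity in $\F_{16}$, so each $\omega^m-1$ with $0<m<5$ has nonzero image in the residue field and is therefore a unit, and your cancellation step goes through.
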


We will further use the following results.

\begin{lem}\label{lemgr}
  For $f, g\in R[\Z_n]$ there holds 
  \[ \sum_{j\in\Z_n} (f\star g)(j) = \Big( \sum_{i\in\Z_n} f(i) \Big) \, \Big(
  \sum_{j\in\Z_n} g(j) \Big) \]
\end{lem}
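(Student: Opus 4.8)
The plan is to prove the identity by an elementary direct computation: one unravels the definition of the cyclic convolution, interchanges the order of the two finite sums, and then reindexes. Concretely, I would begin with
\[ \sum_{j\in\Z_n} (f\star g)(j) \;=\; \sum_{j\in\Z_n}\sum_{i\in\Z_n}
  f(i)\,g(j-i) \:, \]
which is just the definition of~$\star$. Both index sets are finite, so there is no convergence issue and the order of summation may be swapped freely; pulling out the factor $f(i)$, which occurs on the left of every term, gives
\[ \sum_{j\in\Z_n} (f\star g)(j) \;=\; \sum_{i\in\Z_n} f(i) \,
  \Bigl( \sum_{j\in\Z_n} g(j-i) \Bigr) \:. \]

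The key step is the observation that for each fixed $i\in\Z_n$ the map $j\mapsto j-i$ is a bijection of~$\Z_n$, so the inner sum $\sum_{j\in\Z_n} g(j-i)$ equals $\sum_{j\in\Z_n} g(j)$, an element of $R$ that no longer depends on~$i$. Substituting this back and factoring out yields
\[ \sum_{j\in\Z_n} (f\star g)(j) \;=\; \sum_{i\in\Z_n} f(i) \,
  \Bigl( \sum_{j\in\Z_n} g(j) \Bigr) \;=\; \Bigl( \sum_{i\in\Z_n} f(i)
  \Bigr) \Bigl( \sum_{j\in\Z_n} g(j) \Bigr) \:, \]
which is precisely the claim.

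I do not expect any real obstacle here. The only points deserving a moment's attention are that all manipulations must respect the (possibly noncommutative) multiplication of~$R$ --- which they do, since $f(i)$ always appears as a left factor and $\sum_{j}g(j)$ as a right factor, matching the order on the right-hand side --- and the reindexing step in the cyclic group~$\Z_n$. Alternatively, one could deduce the lemma from the earlier results by noting that evaluating a Fourier transform at~$0$ gives $\hat{h}(0)=\sum_{j\in\Z_n} h(j)$, so that $\sum_{j} (f\star g)(j) = (f\star g)\hat{}\,(0) = \hat{f}(0)\,\hat{g}(0) = \bigl( \sum_i f(i) \bigr)\bigl( \sum_j g(j) \bigr)$ by the convolution property $(f\star g)\hat{} = \hat{f}\cdot\hat{g}$; however, this route presupposes a ring extension $S$ containing a primitive $n$-th root of unity, so the self-contained computation above is preferable.
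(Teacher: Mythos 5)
Your proof is correct and follows essentially the same route as the paper's: expand the definition of the convolution, interchange the two finite sums, reindex via the bijection $j\mapsto j-i$ of $\Z_n$, and factor. The paper's version is a one-line computation; your added remarks on noncommutativity and the bijectivity of the shift are sound but not a different argument.
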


\begin{proof}
  We have $\sum_j (f\star g)\, (j) = \sum_j \sum_i f(i)\,g(j-i) =
  \sum_i \sum_j f(i)\,g(j) = \big( \sum_i f(i) \big) \, \big( \sum_j g(j)
  \big)$.
\end{proof}\smallskip


\begin{cor}\label{corgr}
  If $f\in R[\Z_n]$ is invertible then $\sum_i f(i)\in R^\times$.
\end{cor}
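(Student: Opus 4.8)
The plan is to apply Lemma~\ref{lemgr} to the relation that defines the inverse of $f$. First I would recall that the multiplicative identity of the group ring $R[\Z_n]$ is $\delta_0$, whose coefficient sum is $\sum_{j\in\Z_n}\delta_0(j)=\delta_0(0)=1$. Now suppose $f$ is invertible and choose $g\in R[\Z_n]$ with $f\star g=\delta_0=g\star f$. Taking coefficient sums in $f\star g=\delta_0$ and using Lemma~\ref{lemgr} gives
\[ \Big(\sum_{i\in\Z_n}f(i)\Big)\Big(\sum_{j\in\Z_n}g(j)\Big)
   \;=\; \sum_{j\in\Z_n}(f\star g)(j) \;=\; \sum_{j\in\Z_n}\delta_0(j) \;=\; 1 \:, \]
and doing the same with $g\star f=\delta_0$ yields $\big(\sum_j g(j)\big)\big(\sum_i f(i)\big)=1$. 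Hence $\sum_i f(i)$ admits $\sum_j g(j)$ as a two-sided inverse in $R$, so $\sum_i f(i)\in R^\times$.

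Conceptually this is just the statement that the augmentation map $R[\Z_n]\to R$, $f\mapsto\sum_{i\in\Z_n}f(i)$, is a unital ring homomorphism: additivity is clear, multiplicativity is precisely Lemma~\ref{lemgr}, and it sends $\delta_0$ to $1$; ring homomorphisms carry units to units. I do not expect any genuine obstacle. The only minor point worth attention is that the paper works with an arbitrary finite ring $R$, not assumed commutative, so I would be careful to invoke both one-sided identities $f\star g=\delta_0$ and $g\star f=\delta_0$ in order to conclude that $\sum_i f(i)$ is a two-sided unit, rather than relying on a single equation.
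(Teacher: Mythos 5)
Your proof is correct and follows essentially the same route as the paper: apply Lemma~\ref{lemgr} to the equation $f\star g = \delta_0$ and read off that $\sum_i f(i)$ has $\sum_j g(j)$ as an inverse. Your extra care with the two-sided identity (and the augmentation-map remark) is a sensible refinement for noncommutative $R$, but it does not change the argument.
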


\begin{proof}
  If there exists $g\in R[\Z_n]$ with $f\star g=1$ then
  $1 = \sum_j(f\star g)(j) = \big( \sum_i f(i) \big) \, \big( \sum_j g(j)
  \big)$, so that $\sum_i f(i)$ is invertible.
\end{proof}


\section{The (10,40,4) Best code}

\subsection{The original binary code}

It is possible to prove $A_{\F_2,\delta_H}(10,4)\le 40$ by a
modification of the linear programming bound (see, e.g.,
\cite[p.~541]{mws}).  Best~\cite{best} came up with the construction
of a binary code meeting this bound.  This optimal binary $(10,40,4)$
code consists of the words
\[ 0100000011,\ 0011111101,\ 1100101100,\ 0001010111 \] together with
all cyclic shifts of these.  The distance enumerator of Best's code is
given by \[ D_H(x,y) = x^{10} + 22\, x^6y^4 + 12\, x^4y^6 + 5\, x^2y^8
\:. \] Its automorphism group is a semidirect product of the dihedral
group $D_{5}$ and $\Z_2^5$ and hence has $320$ elements.  Litsyn and
Vardy~\cite{litvar} showed that Best's code is unique, i.e., any
binary $(10,40,4)$ code must be isometric to Best's code.
Furthermore, it can be shown (cf.\ \cite{best, cons}) that applying
the so-called Construction~A to Best's $(10,40,4)$ code yields the
densest sphere packing known in $10$ dimensions.

\subsection{The pentacode over $\Z_4$}

The $(10, 40, 4)$ Best code cannot be recognized as a submodule over
$\mathbb{Z}_4$, since the cardinality is not a power of~$2$.
Nevertheless, it was observed by Conway and Sloane~\cite{cons} that
the Best code has a sensible interpretation as a code over $\Z_4$.
Namely, define a code $P\subseteq \Z_4^5$ consisting of all words \[
(c-d,b,c,d,b+c) \quad\text{where }b,c,d \in \{1,3\}\] and all cyclic
shifts of these.  This code has parameters $(5,40,4)_{\Z_4}$ and is
called the {\em pentacode}.  It can be shown that the Gray image of
$P$ is (up to equivalence) the $(10,40,4)$ code discovered by Best.
Furthermore, the code~$P$ is invariant under the automorphisms
\begin{align*}
  (a,b,c,d,e) &\;\mapsto\; (-a, -b, -c, -d, -e) \:,\\
  (a,b,c,d,e) &\;\mapsto\; (-a, 2-b, c, 2-d, -e) \:,\\
  (a,b,c,d,e) &\;\mapsto\; (b,c,d,e,a) \:,\\
  (a,b,c,d,e) &\;\mapsto\; (2+e,2+d,2+c,2+b,2+a) \:.
\end{align*}

\subsection{Spectral analysis of the pentacode}

In order to apply the Fourier transform, we chose to analyze the
pentacode, because Best's original binary code does not satisfy $n \in
\F_2^\times$.  For this we find that the Galois ring ${\rm GR}(4,4)$
as an extension of $\Z_4$ contains the required primitive $5$-th root
of unity $\omega$.  The minimal polynomial of $\omega$ over $\Z_4$ is
given by \[ \varphi_\omega \; = \; x^4+x^3+x^2+x+1 \:. \]

\begin{table}
  \caption{The Fourier transform of the pentacode}
  \label{tabspec}

  The Fourier transform of the pentacode is given by the following 
  vectors together with their negations.
  {\footnotesize \[\!\!\!\!\begin{array}{l}
      (\,1\,,\, 3\om^3\!+\!3\om^2\!+\!3\om\!+\!2\,,\, \om^3\!+\!3\,,\, 
      \om^2\!+\!3\,,\, \om\!+\!3\,)\\
      (\,1\,,\, 3\om\!+\!1\,,\, 3\om^2\!+\!1\,,\, 
      3\om^3\!+\!1\,,\, \om^3\!+\!\om^2\!+\!\om\!+\!2\,)\\
      (\,1\,,\, 3\om^2\!+\!\om\,,\, \om^3\!+\!2\om^2\!+\!\om\!+\!1\,,\, 
      \om^3\!+\!3\om\,,\, 2\om^3\!+\!3\om^2\!+\!3\om\!+\!3\,)\\
      (\,1\,,\, 3\om^3\!+\!\om^2\,,\, 3\om^3\!+\!3\om^2\!+\!2\om\!+\!3\,,\, 
      \om^3\!+\!\om^2\!+\!2\om\!+\!1\,,\, \om^3\!+\!3\om^2\,)\\
      (\,1\,,\, 2\om^3\!+\!\om^2\!+\!\om\!+\!1\,,\, 3\om^3\!+\!\om\,,\, 
      3\om^3\!+\!2\om^2\!+\!3\om\!+\!3\,,\, \om^2\!+\!3\om\,)\\
      (\,1\,,\, \om^3\!+\!3\om^2\!+\!3\om\,,\, \om^3\!+\!2\om\!+\!1\,,\, 
      2\om^3\!+\!3\om^2\!+\!2\om\!+\!3\,,\, 2\om^2\!+\!\om\!+\!1\,)\\
      (\,1\,,\, 2\om^3\!+\!2\om^2\!+\!3\om\!+\!3\,,\, 2\om^3\!+\!\om^2\!+\!1\,,\, 
      \om^3\!+\!2\om^2\!+\!1\,,\, 3\om^3\!+\!3\om^2\!+\!\om\,)\\
      (\,1\,,\, \om^2\!+\!\om\!+\!2\,,\, 3\om^3\!+\!3\om\!+\!1\,,\, 
      \om^3\!+\!\om\!+\!2\,,\, 3\om^2\!+\!3\om\!+\!1\,)\\
      (\,1\,,\, \om^3\!+\!\om^2\!+\!2\om\,,\, 3\om^3\!+\!\om^2\!+\!3\,,\, 
      \om^3\!+\!3\om^2\!+\!3\,,\, 3\om^3\!+\!3\om^2\!+\!2\om\!+\!2\,)\\
      (\,1\,,\, \om^2\!+\!3\om\!+\!3\,,\, 3\om^3\!+\!2\om^2\!+\!3\om\!+\!2\,,\, 
      3\om^3\!+\!\om\!+\!3\,,\, 2\om^3\!+\!\om^2\!+\!\om\,)\\
      (\,1\,,\, \om^3\!+\!\om^2\!+\!3\om\,,\, 3\om^3\!+\!2\om^2\!+\!3\,,\, 
      2\om^3\!+\!3\om^2\!+\!3\,,\, 2\om^3\!+\!2\om^2\!+\!\om\!+\!1\,)\\
      (\,1\,,\, 2\om^2\!+\!3\om\!+\!3\,,\, 2\om^3\!+\!\om^2\!+\!2\om\!+\!1\,,\, 
      3\om^3\!+\!2\om\!+\!3\,,\, 3\om^3\!+\!\om^2\!+\!\om\,)\\
      (\,1\,,\, 2\om^3\!+\!3\om^2\!+\!3\om\,,\, \om^3\!+\!3\om\!+\!1\,,\, 
      \om^3\!+\!2\om^2\!+\!\om\!+\!2\,,\, 3\om^2\!+\!\om\!+\!1\,)\\
      (\,1\,,\, \om^3\!+\!\om^2\!+\!2\om\!+\!2\,,\, 3\om^3\!+\!\om^2\!+\!1\,,\, 
      \om^3\!+\!3\om^2\!+\!1\,,\, 3\om^3\!+\!3\om^2\!+\!2\om\,)\\
      (\,1\,,\, \om^2\!+\!\om\!+\!3\,,\, 3\om^3\!+\!3\om\!+\!2\,,\, 
      \om^3\!+\!\om\!+\!3\,,\, 3\om^2\!+\!3\om\!+\!2\,)\\
      (\,1\,,\, 3\om^3\!+\!\om^2\!+\!3\om\!+\!2\,,\, 3\om^3\!+\!2\om^2\!+\!2\om\!+\!1\,,\, 
      \om^2\!+\!2\om\!+\!3\,,\, 2\om^3\!+\!\om\!+\!3\,)\\
      (\,1\,,\, 2\om^3\!+\!3\om\!+\!1\,,\, 3\om^2\!+\!2\om\!+\!1\,,\, 
      \om^3\!+\!2\om^2\!+\!2\om\!+\!3\,,\, \om^3\!+\!3\om^2\!+\!\om\!+\!2\,)\\
      (\,1\,,\, 2\om^3\!+\!\om^2\!+\!3\om\!+\!2\,,\, 3\om^3\!+\!2\om^2\!+\!\om\!+\!1\,,\, 
      \om^3\!+\!2\om^2\!+\!3\om\,,\, 2\om^3\!+\!3\om^2\!+\!\om\!+\!3\,)\\
      (\,1\,,\, 3\om^3\!+\!\om^2\!+\!2\,,\, 3\om^3\!+\!3\om^2\!+\!2\om\!+\!1\,,\, 
      \om^3\!+\!\om^2\!+\!2\om\!+\!3\,,\, \om^3\!+\!3\om^2\!+\!2\,)\\
      (\,1\,,\, 2\om^3\!+\!\om^2\!+\!3\om\!+\!1\,,\, 3\om^3\!+\!2\om^2\!+\!\om\,,\, 
      \om^3\!+\!2\om^2\!+\!3\om\!+\!3\,,\, 2\om^3\!+\!3\om^2\!+\!\om\!+\!2\,)\\
    \end{array}\]}
\end{table}

We computed the Fourier transform of all words of the pentacode.  The
result is shown in Table~\ref{tabspec}.  It is apparent that the spectrum
of each word of $P$ solely consists of invertible elements in ${\rm
  GR}(4,4)$.  We can further make the following observation. Let
\begin{align*}
  \hat{f} &:=
  (\,1\,,\,3\,\omega+1\,,\, 3\,\omega^2+1\,,\, 
  3\,\omega^3+1\,,\, 3\,\omega^4+1\,)\\
  \hat{g} &:=
  (\,1\,,\, 3\,\omega+2\,,\, 3\,\omega^2+2\,,\, 
  3\,\omega^3+2\,,\, 3\,\omega^4+2\,)\\
  \hat{h} &:=
  (\,1\,,\, 2\,\omega+3\,,\, 2\,\omega^2+3\,,\, 
  2\,\omega^3+3\,,\, 2\,\omega^4+3\,)
\end{align*}
Then for each word $c\in P$ there holds
\[ \hat{c} =  \hat{f} \cdot (-1)^i \cdot \hat{h}^j \cdot 
\hat{g}^k \quad \text{for some }i,j\in \Z_2,\; k\in \Z_{10} \:. \]

\subsection{Algebraic representation}

The preceeding observations lead to the following algebraic structure
of the pentacode.  By applying the inverse transform and doing some
rescaling we obtain the vectors
\begin{align*}
  f &= (1,1,1,2,0)\:,\\
  g &= (2,1,0,0,0)\:,\\
  h &= (1,2,0,0,0)\:.
\end{align*}
Here $h^2=1$ and $g$ is of order $10$.  This yields the following theorem:

\begin{thm}\label{penta}
  Let $f,g,h\in\Z_4[\Z_5]$ be as above.  For each word~$c$ of the
  pentacode~$P$ there holds \[ c = f\star (-1)^i \star h^j \star
  g^k \] for some $i,j\in \Z_2$ and $k\in \Z_{10}$.  Consequently, the
  pentacode is a coset \[ P \; = \; f \star U \:, \] where $U$ is a $40$
  element subgroup of the group of invertible elements of the group
  ring $S := \Z_4[\Z_5]$.
\end{thm}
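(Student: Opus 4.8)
The plan is to transport the multiplicative statement obtained on the Fourier side back to the group ring $S=\Z_4[\Z_5]$, using that the Fourier transform is a ring isomorphism. First I would record that $5\in\Z_4^\times$, so by the two Propositions the map $c\mapsto\hat c$ is a bijection on $S'[\Z_5]$ (with $S'={\rm GR}(4,4)$) carrying $\star$ to the componentwise product $\cdot$; in particular $\hat{(u\star v)}=\hat u\cdot\hat v$ and $\hat{1}=(1,1,1,1,1)$, the identity for $\cdot$. Then I would check, by a direct computation with $\varphi_\om=x^4+x^3+x^2+x+1$, that applying the Fourier transform to the group-ring elements $f=(1,1,1,2,0)$, $g=(2,1,0,0,0)$, $h=(1,2,0,0,0)$ produces (after the indicated rescaling) the vectors $\hat f,\hat g,\hat h$ displayed in the spectral-analysis subsection. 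This is the bookkeeping step that justifies reusing the observation ``$\hat c=\hat f\cdot(-1)^i\cdot\hat h^{\,j}\cdot\hat g^{\,k}$'' in the group ring.

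Next I would establish that $f,g,h$ are units of $S$. Since $\hat f,\hat g,\hat h$ have all coordinates in ${\rm GR}(4,4)^\times$ — every listed coordinate reduces mod $2$ to a nonzero element of $\F_{16}$, which one verifies coordinatewise — each of $\hat f,\hat g,\hat h$ is invertible for the componentwise product, hence $f,g,h\in S^\times$ by the bijectivity of the transform. Then I would determine the orders: $\hat h^{\,2}=(1,1,1,1,1)$ by squaring coordinatewise modulo $\varphi_\om$, so $h^2=1$ in $S$; similarly $\hat g^{\,k}\ne(1,1,1,1,1)$ for $1\le k\le 9$ but $\hat g^{\,10}=(1,1,1,1,1)$, so $g$ has order exactly $10$. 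A short separate check shows $h\notin\langle g\rangle$ (e.g.\ $\hat h$ is not a power of $\hat g$), so $U:=\langle -1,h,g\rangle\le S^\times$ has order $2\cdot2\cdot10=40$, the map $(i,j,k)\mapsto(-1)^i\star h^{\,j}\star g^{\,k}$ being a bijection onto $U$ — this requires noting that $-1$, $h$, $g$ generate a group in which these $40$ products are distinct, which follows again from distinctness of the corresponding $40$ Fourier vectors.

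Finally I would assemble the theorem. Each word $c\in P$ satisfies $\hat c=\hat f\cdot(-1)^i\cdot\hat h^{\,j}\cdot\hat g^{\,k}$ for suitable $i,j\in\Z_2$, $k\in\Z_{10}$ (read off from Table~\ref{tabspec} together with the negations); applying the inverse transform and using multiplicativity gives $c=f\star(-1)^i\star h^{\,j}\star g^{\,k}\in f\star U$. Conversely $|P|=40=|U|$ and the map $u\mapsto f\star u$ is injective on $S$ (as $f$ is a unit), so $f\star U\subseteq P$ forces equality $P=f\star U$. The main obstacle is purely computational: verifying the nineteen table entries (plus negations) really are captured by the three generators, i.e.\ that the exponent triple $(i,j,k)$ exists for every row — this is a finite but tedious check in ${\rm GR}(4,4)$, best organized by computing $\hat f^{-1}\cdot\hat c$ for each row and matching it against the precomputed list of the $40$ elements of $\langle-1,\hat h,\hat g\rangle$.
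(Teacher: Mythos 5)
Your proposal is correct and follows essentially the same route as the paper: the authors also obtain the theorem by computing the Fourier spectra of all forty pentacode words (Table~\ref{tabspec}), observing that each spectrum factors as $\hat f\cdot(-1)^i\cdot\hat h^{\,j}\cdot\hat g^{\,k}$, and transporting this back through the inverse transform, with the cardinality count $|P|=40=|U|$ giving equality of the coset. Your write-up merely makes explicit the bookkeeping (unit verification, orders of $g$ and $h$, distinctness of the forty products) that the paper leaves implicit in its ``preceding observations.''
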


One can prove that the ring $S=\Z_4[\Z_5]$ is isomorphic to ${\rm
  GR}(4,4)\times \Z_4$, and hence the unit group $S^\times$ is
isomorphic to $\Z_2^4\times\Z_{15}\times\Z_2 \cong
\Z_2^5\times\Z_3\times\Z_5$ (see \cite{mcd}).  From this it is easy to
see that there are ${5\choose 3}_2 = 155$ subgroups $U$ of order~$40$
in the unit group of~$S$; here ${5\choose 3}_2$ denotes the Gaussian
binomial coefficient, i.e., the number of $3$-dimensional subspaces of
$\Z_2^5$.  By a computer search we found that only~$2$ of these
subgroups yield (up to equivalence) the pentacode.  Moreover, the
pentacode occurs twice among the $12$ cosets of each of these two
subgroups.

The four algebraic representations of the pentacode as $f\star U$,
where $f\in S$ and $U$ is a subgroup of $S^\times$, are given in
Table~\ref{tabbest}.  In each case, we state four generators of the
subgroup~$U$, which have order $5$, $2$, $2$, and $2$, respectively,
and we chose the representative $f$ to be of minimal degree in the
`polynomial' representation $\sum_{i\in\Z_j} f(i) \delta_1^i$.  The
original pentacode $P$ of Theorem~\ref{penta} is found as
$f_{1,1}\star U_1$; for this, note that $g^5=(1,0,0,0,2)$ and
$g^6=(0,1,0,0,0)$.

\begin{table}
  \caption{The four representations of the pentacode as $f\star U$}
  \label{tabbest}

  {\centering
    \renewcommand{\arraystretch}{1.1}
    \begin{tabular}{r|c}
      $U_1 = \langle\,(0,1,0,0,0)\,,$~ \\
      $(3,0,0,0,0)\,,$~ & ~$f_{1,1}=(3,1,1,0,0)$ \\
      $(1,2,0,0,0)\,,$~ & ~$f_{1,2}=(3,3,1,0,0)$ \\
      $(1,0,0,0,2)\,\rangle$~ \\[1mm]\hline\\[-2mm]
 
      $U_2 = \langle\,(0,1,0,0,0)\,,$~ & \\
      $(3,0,0,0,0)\,,$~ & ~$f_{2,1}=(3,1,0,1,0)$ \\
      $(1,0,2,0,0)\,,$~ & ~$f_{2,2}=(1,3,0,1,0)$ \\
      $(1,0,0,2,0)\,\rangle$~ \\
    \end{tabular}\\
    }
\end{table}

For each subgroup $U_j$, $j=1,2$, the pentacodes $f_{j,1}\star U_j$
and $f_{j,2}\star U_j$ are related as $f_{j,2}\star U_j = a\star
f_{j,1}\star U_j$, where $a=(3,2,2,2,2)\notin U_j$ is a unit element;
this element has the property that the left multiplication \[ L_a:
S^\times \to S^\times \:, \quad x\mapsto a\star x \] is a $w_{\rm
  Lee}$-isometry on the unit group $S^\times$.

In fact, the set of $s\in S$ such that $L_s:S^\times\to S^\times$ is
an isometry of the unit group $S^\times$ equals $\langle \delta_1, -1,
a \rangle$, whereas the set of $s\in S$ such that $L_s:C\to C$ is a
code isometry equals $\langle \delta_1, -1 \rangle$, for all four
pentacodes $C$.  In particular, not all left multiplications of
elements in $U_j$ are isometries.

We note that \cite{cons} mentions also four pentacodes ${\mathcal
  B}_0$, ${\mathcal B}_1$, ${\mathcal B}_2$, ${\mathcal B}_3$, which
are constructed as orbits of a group of isometries on $\Z_4^5$.  Here,
${\mathcal B}_0$ equals $f_{1,1}\star U_1$ and ${\mathcal B}_2$ equals
$f_{1,2}\star U_1$, whereas ${\mathcal B}_1$ and ${\mathcal B}_3$ are
different from $f_{2,1}\star U_2$ and $f_{2,2}\star U_2$.

\subsection{A decoding algorithm}

We present a simple decoding algorithm for the pentacode which is
based on its algebraic representation as $f\star U$ in the group ring.
The decoding algorithm differs considerably from the one given
in~\cite{cons}.

In the following we will write $f\,g$ for the convolution $f\star g$
in a group ring.  For concreteness we consider in the ring
$S=\Z_4[\Z_5]$ the pentacode $C = f_{2,1}\,U_2 = f\,U$, with $f =
(3,3,2,1,0) = (0,1,1,1,2)^{-1} \in C$ and $U = U_2 = \langle \delta_1,
-1, g, h \rangle$, where $g=(1,0,2,0,0) = 2\delta_2+1$ and
$h=(1,0,0,2,0) = 2\delta_3+1$.  Hence $C$ is given as
\[ C = \{ f\,(-1)^i\,g^j\,h^k\,\delta_\ell\mid i,j,k\in\Z_2\,, 
\ell\in\Z_5 \} \:. \]
The following lemma gives a simple membership test for $U$.

\begin{lem}\label{lemtest}
  Let $s\in S$.  Then $s\in U$ if and only if
  \[ s \equiv \delta_\ell\rmod 2\quad\text{and}\quad
  \delta_{-\ell}\,s = 2\,(i+j\,\delta_2+k\,\delta_3)+1 \: \]
  for some $\ell\in\Z_5$ and $i,j,k\in\Z_2$.
\end{lem}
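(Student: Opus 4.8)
The plan is to first write down an explicit parametrisation of the $40$ elements of $U$, after which both directions of the equivalence will be immediate. Since the ambient ring $S=\Z_4[\Z_5]$ is commutative and the chosen generators $\delta_1$, $-1$, $g$, $h$ of $U$ have orders $5$, $2$, $2$, $2$ respectively (indeed $\delta_1^5=\delta_0=1$, $(-1)^2=1$, and $g^2=h^2=1$ because $4=0$ in $\Z_4$), every element of $U$ is of the form $(-1)^i\,g^j\,h^k\,\delta_\ell$ with $i,j,k\in\Z_2$ and $\ell\in\Z_5$.

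The one genuine computation will be to collapse the product $(-1)^i\,g^j\,h^k$ into closed form. For $j,k\in\{0,1\}$ one has $g^j=2j\,\delta_2+1$, $h^k=2k\,\delta_3+1$ and $(-1)^i=1+2i$, identifying a scalar $c\in\Z_4$ with $c\,\delta_0\in S$. Expanding the product in $\Z_4[\Z_5]$, every term involving a product of two of the coefficients ``$2$'' is killed by $4=0$ (also using $\delta_2\,\delta_3=\delta_0$), so that
\[ (-1)^i\,g^j\,h^k \;=\; 1 + 2i + 2j\,\delta_2 + 2k\,\delta_3 \;=\; 2\,(i+j\,\delta_2+k\,\delta_3)+1 \:. \]
Hence a general element of $U$ is $s=\bigl(2\,(i+j\,\delta_2+k\,\delta_3)+1\bigr)\,\delta_\ell$.

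For the forward implication I would take such an $s\in U$: reducing modulo $2$ gives $s\equiv\delta_\ell$, and multiplying by $\delta_{-\ell}$ (using $\delta_{-\ell}\,\delta_\ell=\delta_0=1$ and commutativity) gives $\delta_{-\ell}\,s=2\,(i+j\,\delta_2+k\,\delta_3)+1$, which are precisely the two stated conditions. Conversely, assuming $s\equiv\delta_\ell\rmod2$ and $\delta_{-\ell}\,s=2\,(i+j\,\delta_2+k\,\delta_3)+1$ for some $\ell\in\Z_5$ and $i,j,k\in\Z_2$, I recover $s=\delta_\ell\,(\delta_{-\ell}\,s)=\delta_\ell\bigl(2\,(i+j\,\delta_2+k\,\delta_3)+1\bigr)=(-1)^i\,g^j\,h^k\,\delta_\ell\in U$, reading the displayed identity backwards.

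I do not expect any real obstacle here beyond bookkeeping of the arithmetic in $\Z_4$. The one point worth flagging is that the first condition $s\equiv\delta_\ell\rmod2$ is in fact already implied by the second (it merely pins down the unique admissible index $\ell$, namely via the support of $s\bmod2$); it is nevertheless kept in the statement because, in the decoding application, one reads off $\ell$ from $s\bmod2$ first and only then verifies that $\delta_{-\ell}\,s$ has the prescribed shape.
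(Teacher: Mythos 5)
Your proposal is correct and follows essentially the same route as the paper: the paper's proof is exactly the computation $(-1)^i g^j h^k \delta_\ell = (2+1)^i(2\delta_2+1)^j(2\delta_3+1)^k\delta_\ell = \bigl(2(i+j\delta_2+k\delta_3)+1\bigr)\delta_\ell$, using $4=0$ in $\Z_4$ to kill the cross terms, with both directions of the equivalence then being immediate. Your additional remarks (the redundancy of the first condition, the explicit converse) are correct but only spell out what the paper leaves implicit.
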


\begin{proof}
  An element $u\in U$ is of the form $(-1)^i\,g^j\,h^k\,\delta_\ell =
  (2+1)^i\,(2\delta_2+1)^j\,(2\delta_3+1)^k\,\delta_\ell =
  \big(2\,(i+j\,\delta_2+k\,\delta_3)+1\big)\delta_\ell$, for some
  $\ell\in\Z_5$ and $i,j,k\in\Z_2$.  
\end{proof}

Given $y\in S$ we can check whether $y\in C = f\,U$ by applying the
test of Lemma~\ref{lemtest} to $z=f^{-1}\,y$.  In case $y\in C$ we also
obtain this way the `message' $(i,j,k,\ell) \in \Z_2^3\times\Z_5$.

Now we present a decoding algorithm for the pentacode $C$.  Let $y\in
S$ be the received vector, and assume that $y = c+e = f\,u+e$, where
$c\in C$, $u\in U$, and $e\in S$ is an error vector of Lee weight $\le
2$.  We note that $f^{-1} = (0,1,1,1,2)$.

\begin{enumerate}
\item Compute $z := f^{-1}\,y$.

  [In fact, $z = u + f^{-1}\,e$.  Note that $u\equiv\delta_\ell\rmod
  2$, so that $w_H(u\smod 2)=1$.  Furthermore, $w_H(f^{-1}\,e\smod 2)$
  is odd if and only if $w_H(e\smod 2)$ is odd.]\smallskip

\item[2a)] Case $w_H(z\smod 2)$ odd: Check if $z\in U$ by
  applying Lemma~\ref{lemtest}.  If yes, conclude that no error
  occurred, and output the corresponding message $(i,j,k,\ell)$; if
  no, conclude that two errors occurred.

  [Note that $w_H(e\smod 2)$ is even.]\smallskip

\item[2b)] Case $w_H(z\smod 2)$ even:

  [Note that $w_H(e\smod 2)$ is odd, and hence $e=\pm\delta_r$
  for some $r\in\Z_5$.  We thus have $z = u \pm \delta_r\,f^{-1}$,
  where $u = \delta_\ell\,\big(2\,(i+j\,\delta_2+k\,\delta_3)+1\big)$
  for some $\ell\in\Z_5$ and $i,j,k\in\Z_2$.]\smallskip
  
  Determine $\ell\in\Z_5$, $r\in\Z_5$, and a sign according to the
  following table (with $m\in\Z_5$ to be chosen):\medskip

  {\centering
    \renewcommand{\arraystretch}{1.1}
    \begin{tabular}{c|c}
      $\delta_m\,z$ & $(m\!+\!\ell, m\!+\!r, \pm)$ \\[.5mm]\hline\\[-2.5mm]
      $(o,\pm 1,o,o,2)$ & $(0,0,\pm)$ \\
      $(o,o,\pm 1,o,0)$ & $(3,4,\pm)$ \\[.5mm]
      $(o,\pm 1,e,2,e)$ & $(2,4,\pm)$ \\
      $(\pm 1,o,e,0,e)$ & $(4,3,\pm)$ \\[.5mm]
      $(\pm 1,e,\pm 1,e,e)$ & $(1,4,\pm)$
    \end{tabular}\\
  }\smallskip
  Here $e$ stands for an even element and $o$ for an odd element of $\Z_4$.
  
  [Note that $f^{-1} = (0,1,1,1,2)$ and
  $2\,(i+j\,\delta_2+k\,\delta_3)+1$ is of the form $(o,0,e,e,0)$.]\smallskip

  Then $\delta_{-\ell}\, ( z \mp \delta_r\,f^{-1} ) =
  2\,(i+j\,\delta_2+k\,\delta_3)+1$ for some $i,j,k\in\Z_2$; output
  the message $(i,j,k,l)$.
\end{enumerate}

It is not hard to see that the pattern given in the table determine
uniquely $\ell\in\Z_5$, $r\in\Z_5$, and the sign of the error
$e=\pm\delta_r$.  Thus, the decoding algorithm corrects all errors of
Lee weight~$1$ and detects all errors of Lee weight~$2$.  This shows
again that the code has minimum Lee distance~$4$.


\section{Investigation of other codes}

The Gray images of $\Z_4$-codes of the form $f\star U$, where $f\in
S=\Z_4[\Z_n]$ and $U$ is a subgroup of the unit group $S^\times$,
yield good binary codes in many cases.  For example, binary codes with
parameters $(6,4,4)$, $(8,16,4)$, $(10,40,4)$, $(12,128,4)$,
$(14,392,4)$, $(12,24,6)$, and $(14,56,6)$ can be constructed this
way.  In this section we study some of these cases as well as the
$(12,144,4)$ Julin code in greater detail.  We start with a
preliminary observation.

\begin{prop}
  Let $S:=\Z_4[\Z_n]$, let $f\in S$ and $U\subseteq S^\times$, and let
  $C:=f\star U$.  Then the Lee distance between two codewords of $C$
  is even, in particular the minimum Lee distance of the code $C$ is
  even.
\end{prop}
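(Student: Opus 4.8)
The plan is to reduce everything modulo~$2$. First I would record the elementary fact that $w_{\rm Lee}$ is parity-preserving on $\Z_4$: since $w_{\rm Lee}$ takes the values $0,1,2,1$ on $0,1,2,3$, one has $w_{\rm Lee}(a)\equiv a\pmod 2$ for every $a\in\Z_4$ (reading $a$ via the ring homomorphism $\Z_4\to\Z_2$). Extending componentwise, for any $x\in S=\Z_4[\Z_n]$, viewed as a vector, the total Lee weight satisfies $w_{\rm Lee}(x)\equiv\sum_{i\in\Z_n}x(i)\pmod 2$. Consequently, for two codewords $c_1,c_2\in C$ the Lee distance is $\delta_{\rm Lee}(c_1,c_2)=w_{\rm Lee}(c_1-c_2)\equiv\sum_i c_1(i)-\sum_i c_2(i)\pmod 2$, so it suffices to show that the coordinate sums of all words of $C$ have the same parity.

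Next I would write $c_j=f\star u_j$ with $u_j\in U\subseteq S^\times$ for $j=1,2$. By Lemma~\ref{lemgr} (applied with $R=\Z_4$), $\sum_i c_j(i)=\bigl(\sum_i f(i)\bigr)\bigl(\sum_i u_j(i)\bigr)$. Since each $u_j$ is invertible in $S$, Corollary~\ref{corgr} gives $\sum_i u_j(i)\in\Z_4^\times=\{1,3\}$; in particular $\sum_i u_j(i)$ is odd, i.e.\ $\equiv 1\pmod 2$. Hence $\sum_i c_j(i)\equiv\sum_i f(i)\pmod 2$ for both $j=1,2$, and therefore $\sum_i c_1(i)-\sum_i c_2(i)\equiv 0\pmod 2$. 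Combining with the first step shows $\delta_{\rm Lee}(c_1,c_2)$ is even, and taking the minimum over distinct codewords yields that the minimum Lee distance of $C$ is even.

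There is essentially no serious obstacle here: the only step requiring a moment's thought is the observation that $w_{\rm Lee}$ preserves parity on $\Z_4$, which converts the statement about the metric into a statement about coordinate sums, after which Lemma~\ref{lemgr} and Corollary~\ref{corgr} do all the work. One should just be mildly careful to phrase the argument for the possibly non-linear code $C$: nothing about closure of $C$ is used, only that every codeword equals $f$ times a unit of $S$, so that all codewords share the parity of $\sum_i f(i)$.
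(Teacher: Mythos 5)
Your proof is correct and follows essentially the same route as the paper's: both reduce the Lee distance modulo $2$ to the coordinate sum via the parity-preserving property of $w_{\rm Lee}$, and both invoke Lemma~\ref{lemgr} and Corollary~\ref{corgr} to show all codewords of $f\star U$ share the parity of $\sum_i f(i)$. Your version merely streamlines the paper's case split on the parity of $\sum_i f(i)$ into a single argument and makes explicit the observation $w_{\rm Lee}(a)\equiv a\pmod 2$, which the paper leaves implicit.
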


\begin{proof}
  Let $S_{\rm odd} := \{f\in S\mid \sum_if(i)\in\Z_4^\times=\{1,3\}\}$
  and let $S_{\rm even} := S\setminus S_{\rm odd}$.  By
  Corollary~\ref{corgr}, $U\subseteq S^\times\subseteq S_{\rm odd}$.

  Case 1: $f\in S_{\rm odd}$.  Then from Lemma~\ref{lemgr} it follows
  $C = f\star U \subseteq S_{\rm odd}$.  Let $c,c'\in C\subseteq
  S_{\rm odd}$.  Then $\sum_i(c-c')(i) = \sum_ic(i) - \sum_ic'(i)$ is
  even, so that $c-c'\in S_{\rm even}$.  It follows that $w_{\rm
    Lee}(c-c') = \sum_i w_{\rm Lee}(c-c')(i)$ is even.

  Case 2: $f\in S_{\rm even}$.  Then from Lemma~\ref{lemgr} it follows
  $C = f\star U\subseteq S_{\rm even}$.  As before, for $c,c'\in C$ it
  follows $c-c'\in S_{\rm even}$ and thus $w_{\rm Lee}(c-c')$ is even.
\end{proof}

\subsection{The $(12, 144, 4)$ Julin code}

Julin's binary $(12, 144, 4)$ code~\cite{julin} (see
also~\cite[p.~70f]{mws}) can be constructed by taking as words the 132
blocks of a Steiner $S(5, 6, 12)$ system and adding six words each of
weight~$2$ and~$10$.  The code was shown to be optimal in~\cite{obk}.
Conway and Sloane~\cite{cons} provided a quaternary construction for a
canonical version of the Julin code.  In this construction one defines
the vectors
\begin{gather*}
  c_1=(0,0,1,1,2,2)\quad c_2=(0,0,2,2,1,1)\\
  c_3=(0,1,0,2,1,2)\quad c_4=(0,2,0,1,2,1)\\
  c_5=(0,1,2,0,2,1)\quad c_6=(0,1,1,3,3,2)\\
  c_7=(0,1,2,3,1,3)\quad c_8=(0,1,3,1,2,3)\\
  c_9=(0,1,3,2,3,1)\quad c_{10}=(0,2,1,1,3,3)\\
  c_{11}=(3,1,\dots,1)\ \ c_{12}=(2,0,\dots,0)\ \ c_{13}=(0,2,\dots,2)
\end{gather*}
and let the {\em quaternary Julin code} $J$ consist of all cyclic
shifts and negations of the vectors $c_1, \dots, c_{13}$.  This code
has parameters $(6, 144, 4)_{\Z_4}$.

We performed a spectral analysis of the quaternary Julin code~$J$.
Although $n=6$ is not a unit in characteristic~$4$, the Fourier
transform provided us with enough information to find the following
algebraic representation.

Let $S$ be the group ring $\Z_4[\Z_6]$, and let $g=(0,1,1,1,1,1)\in S$
which is an invertible element of order~$2$.  We consider the subgroup
$U$ of $S^\times$ generated by $-1$, $\delta_1$, and $g$, which
has~$24$ elements.

Then~$J$ is invariant under the group~$U$; it is actually a union
of~$8$ cosets of~$U$.  More concretely, $J$ can be written as
\begin{align*}
  J &= \{c_1, c_5\}\star (\{1, (0,0,3,1,0,1)\}\star U)\\
  &\quad~~~\cup\ c_5\star (\{(0,1,0,1,0,3), (1,0,0,0,1,3)\}\star U)\\
  &\quad~~~\cup\ c_1\star (0,0,0,1,1,1)\star U\\
  &\quad~~~\cup\ c_{11}\star U \:.
\end{align*}

\subsection{The $(12, 24, 6)$ Hadamard code}

The binary $(12, 24, 6)$ binary Hadamard code (see, e.g.,
\cite[p.~49]{mws}) is constructed from a Hadamard matrix~$H$ of
order~$12$, and such a matrix can be obtained from the Paley
construction.  The Hadamard matrix~$H$ is changed into a binary
matrix~$A$, where $+1$\,s are replaced by $0$\,s and $-1$\,s by
$1$\,s.  Then the {\em Hadamard code} consists of all rows of the
matrix~$A$, together with their complements.  From the definition of
Hadamard matrix the code has parameters $(12, 24, 6)$, which can be
seen to be optimal by applying the Plotkin bound
(see~\cite[p.~43]{mws}).

We now give an algebraic representation of a quaternary $(6, 24,
6)_{\Z_4}$ by using again the group ring $S=\Z_4[\Z_6]$.  Let $U$ be
the subgroup of $S^\times$ generated by the elements $(2,1,1,1,1,1)$,
$(0,3,3,3,3,3)$, and $(2,0,0,3,0,0)$ of order~$2$, as well as the
element $(0,0,1,0,0,0)$ of order~$3$.  Then $U$ has~$24$ elements, and
\[ C = f \star U \] with $f = (0,1,2,0,0,0)$ defines a $(6, 24,
6)_{\Z_4}$ code.  It can be shown that the Gray map of $C$ is a
translation of a $(12, 24, 6)$ Hadamard code.

We remark that a binary $(12, 24, 6)$ code was not found in a similar
way in the {\em binary} group ring $\Z_2[\Z_{12}]$.

\subsection{A $(14, 56, 6)$ code}

We performed a computer search to obtain an interesting heptacode.
For this we took a similar avenue as for the Best code and define:
\begin{align*}
  f &= (1,2,3,1,1,0,0)\\
  g &= (2,1,0,0,0,0,0)\\
  h &= (1,2,0,0,0,0,0)
\end{align*}
Again $h^2=1$, and here $g$ is of order $14$.  We form the code \[ H
:= \{ f\star (-1)^i\star h^j \star g^k \mid i,j\in \Z_2,\, k\in
\Z_{14}\} \:, \] which has $56$ words and is of minimum Lee distance
$6$.  The code turns out to be distance-invariant and its distance
enumerator is given by \[ D_{\rm Lee}(x,y) = x^{14} + 36\, x^8y^6 +
7\, x^6y^8 + 12\, x^4y^{10} \:. \]

\enlargethispage{-9.6cm}

The Gray image of $H$ is a binary $(14, 56, 6)$ code.  Consulting
Grassl's tables~\cite{grassl} we find that for a {\em linear\/} binary
code only the parameters $(14, 2^5, 6)$ are possible.  From Brouwer's
table~\cite{brouwer} however we find a strong competitor, namely the
doubly shortened Nordstrom-Robinson code, which is a $(14,64,6)$ code.

We remark that by shortening the code $H$ we get a code with
parameters $(13, 28, 6)$, which are the same as for the so-called
conference matrix code of length~$13$ (see~\cite[p.~57]{mws}).



\section*{Conclusions}

In this paper we provided new insights into the algebraic
representation of Best's $(10, 40, 4)$ code and other non-linear
binary codes.  We established the discrete Fourier transform as a
useful tool when analyzing non-linear codes, and found a presentation
of certain optimal non-linear binary codes as coset(s) of a subgroup
in the unit group of a ring.  It is an open problem whether
asymptotically good codes can be obtained by this method.


\section*{Acknowledgements}

The authors would like to thank Michael Kiermaier and Oliver Gnilke
for inspiring discussions.  This work was supported in part by Science
Foundation Ireland, Grant 06/MI/006 and Grant 08/IN.1/I1950.


\end{document}